\newtheorem{theorem}{Theorem}
\newtheorem{lemma}{Lemma}
\newtheorem{definition}{Definition}
\title{Computing all monomials of degree $n-1$ using $2n-3$ AND gates}
\author{Thomas H\"aner}
\date{\small Amazon Quantum Solutions Lab, Z\"urich, Switzerland}
\begin{document}

\maketitle

\begin{abstract}
We consider the vector-valued Boolean function $f:\{0,1\}^n\rightarrow \{0,1\}^n$ that outputs all $n$ monomials of degree $n-1$, i.e., $f_i(x)=\bigwedge_{j\neq i}x_j$, for $n\geq 3$. Boyar and Find have shown that the multiplicative complexity of this function is between $2n-3$ and $3n-6$. Determining its exact value has been an open problem that we address in this paper. We present an AND-optimal implementation of $f$ over the gate set $\{\text{AND},\text{XOR},\text{NOT}\}$, thus establishing that the multiplicative complexity of $f$ is exactly $2n-3$.
\end{abstract}

\section{Introduction}

The multiplicative complexity of a Boolean function $f:\{0,1\}^n\rightarrow \{0,1\}^m$ is the minimal number of AND gates required to implement $f$ over $\{\land, \oplus, 1\}$, where $\land$ is the logical AND of two Boolean inputs, $\oplus$ computes the exclusive OR of an arbitrary number of Boolean inputs, and the constant $1$ input can be used to invert a Boolean input $\overline x = x\oplus 1$. The multiplicative complexity is thus a good measure of the implementation cost of a function in cases where AND gates are much more costly than XOR gates. This is the case, for example, in fault-tolerant quantum computing \cite{meuli2019role} and secure computation protocols \cite{albrecht2015ciphers}.

While it is computationally intractable to compute the multiplicative complexity for a general function \cite{find2014complexity}, there are specific (classes of) functions for which the exact multiplicative complexity is known~\cite{boyar2008tight,ccalik2019multiplicative,turan2014multiplicative,ccalik2020boolean,haner2022multiplicative}.

Boyar and Find~\cite{boyar2018multiplicative} have shown that the vector-valued Boolean function $f(x)$ where each output $f_i(x)$ for $i\in\{1,...,n\}$ is given by
\begin{equation}\label{eq:monomial-function}
    f_i(x) = \bigwedge_{j\in\{1,...,n\}\setminus\{i\}} x_j
\end{equation}
has multiplicative complexity between $2n-3$ and $3n-6$. Boyar and Find prove the lower bound using an iterated algebraic degree argument, and they provide a construction that computes $f(x)$ with $3n-6$ AND gates.

\paragraph{Our contribution.} We improve upon the construction by Boyar and Find, and present an AND-optimal implementation of $f(x)$ using $2n-3$ AND gates, allowing us to conclude that the multiplicative complexity of $f(x)$ is exactly $2n-3$. This solves an open problem from Ref.~\cite{boyar2018multiplicative}.

\section{Preliminaries}

\begin{definition}[Algebraic Normal Form (ANF)]
For a Boolean function $f:\{0,1\}^n\rightarrow\{0,1\}$, its algebraic normal form is the unique representation
\[
    f(x)=\bigoplus_{I\subset\{1,...,n\}} a_I \bigwedge_{i\in I}x_i,
\]
with $a_I\in\{0,1\}$ and $x_i$ denoting the $i$th bit of the integer $x\in\{0,...,2^n-1\}$. Each $\bigwedge_{i\in I}x_i$ where $a_I=1$ is called a monomial of $f$.
\end{definition}

\begin{definition}[Algebraic Degree]
For a Boolean function $f:\{0,1\}^n\rightarrow\{0,1\}$, its algebraic degree, denoted by $\operatorname{deg}(f)$ is
\[
    \operatorname{deg}(f)=\max_{I\subset\{1,...,n\}} a_I |I|,
\]
where $a_I\in\{0,1\}$ denote the ANF coefficients of $f$ and $|I|$ is the number of elements in the set $I$.
\end{definition} 

\begin{definition}[Multiplicative Complexity]
For a Boolean function $f:\{0,1\}^n\rightarrow\{0,1\}$, its multiplicative complexity, denoted by $c_\land(f)$, is defined as the smallest number of AND gates in any implementation of $f$ consisting only of AND gates with two Boolean inputs, exclusive OR gates, and NOT gates.
\end{definition} 

One general way to obtain a lower bound on the multiplicative complexity of a function is by the degree lower bound~\cite{schnorr1988multiplicative}.
\begin{lemma}[Proposition 3.8 in \cite{schnorr1988multiplicative}]
For all Boolean functions $f$, it holds that $c_\land(f) \geq deg(f)-1$.
\end{lemma}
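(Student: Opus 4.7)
I will prove the contrapositive in the form: if $f$ can be computed by a circuit using only $k$ AND gates (together with XOR gates and the constant $1$), then $\deg(f)\le k+1$. The proof will be by induction on $k$. The base case $k=0$ is immediate: every wire in a circuit built only from XOR and negation computes an $\mathbb{F}_2$-affine function of the inputs, so the output has degree at most $1=0+1$.

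For the inductive step, the naive idea of "each AND gate can at most double the degree of its inputs" gives an exponential bound, which is useless; the point of the lemma is to turn multiplication into addition. To avoid this, I will pick in the given $k$-AND circuit an AND gate $G$ of minimal depth, so that both of its input wires are computed using zero AND gates and are therefore affine functions $a(x)$ and $b(x)$. Replace the output wire of $G$ with a fresh Boolean variable $y$; the resulting circuit has $k-1$ AND gates and computes a function $g(x,y)$ on $n+1$ inputs, with $f(x)=g\bigl(x,a(x)b(x)\bigr)$.

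By the inductive hypothesis, $\deg(g)\le k$. Since $g$ is $\mathbb{F}_2$-polynomial and $y^2=y$, we can write $g(x,y)=g_0(x)\oplus y\cdot g_1(x)$ with $\deg(g_0)\le k$ and $\deg(g_1)\le k-1$. Substituting $y=a(x)b(x)$ and using that $a,b$ are affine (so their product has degree at most $2$) gives
\[
    f(x)=g_0(x)\oplus a(x)b(x)\,g_1(x),
\]
whose degree is at most $\max\bigl(k,\,2+(k-1)\bigr)=k+1$, completing the induction.

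The main obstacle is the one the "minimal-depth AND gate" trick circumvents: in a general DAG, an AND gate may be fed by wires that themselves pass through many AND gates, and naively bounding $\deg$ additively through such a gate does not give a linear bound because the sub-DAGs on the two inputs may share gates. Choosing $G$ so that both of its inputs are affine removes this complication entirely, because then the substitution step raises the degree only by the controlled amount $2-1=1$ (replacing a degree-$1$ variable by a degree-$2$ expression), which is exactly what is needed to match the one extra AND gate we remove.
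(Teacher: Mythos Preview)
The paper does not give its own proof of this lemma; it simply cites Proposition~3.8 of Schnorr~\cite{schnorr1988multiplicative} and moves on. There is therefore no in-paper argument to compare against.

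Your proof is correct and is essentially the standard one. Choosing a topologically first AND gate $G$ guarantees that both of its inputs are affine functions $a(x),b(x)$; replacing $G$'s output by a fresh variable $y$ yields a $(k-1)$-AND circuit computing $g(x,y)$; the inductive hypothesis gives $\deg(g)\le k$, and the decomposition $g(x,y)=g_0(x)\oplus y\,g_1(x)$ with $\deg(g_1)\le k-1$ then bounds $\deg f=\deg\bigl(g_0(x)\oplus a(x)b(x)\,g_1(x)\bigr)\le k+1$. Your explanation of why the minimal-depth choice of $G$ is needed (to avoid the naive doubling bound when the two input subcircuits share AND gates) is also accurate.
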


We use the short-hand notation $x_1\cdots x_n$ to represent $\bigwedge_{i=1}^n x_i$, and we refer to the computation of a logical AND of two Boolean values $x,y$, i.e., $x\land y$, as multiplication of $x$ by $y$. Similarly, we refer to the computation of a logical exclusive OR (XOR) of $x$ and $y$, i.e., $x\oplus y$, as addition (modulo 2).

We say a monomial has a ``gap'' at $x_k$ if the monomial is of the form $\bigwedge_{i\in I}x_i$ and $k\notin I$. When writing down monomials explicitly as $x_ix_j\cdots x_m$, we assume that the variables have been ordered such that $i<j<\cdots <m$. Similarly, we say that a multiplication of a monomial $x_ix_j\cdots x_m$ by $x_n$ appends $x_n$ to the monomial if $m<n$.

\section{Construction}\label{sec:construction}

In this section, we present an AND-optimal construction to evaluate $f(x)$ using $2n-3$ AND gates.

We evaluate $f(x)$ in 3 stages. In the first stage, we compute the XOR of all monomials of degree $n-1$, i.e., for $n$ inputs $x_1,...,x_n$ the output of the first stage is
\begin{equation}\label{eq:stage1}
   s_0^n := \bigoplus_{i=1}^n \bigwedge_{j\neq i} x_j.
\end{equation}
Boyar and Peralta~\cite{boyar2008tight} have shown that $s_0^n$ may be computed using $n-2$ AND gates, and that this is optimal.

The second stage produces an additional $n-1$ intermediate outputs starting from $s_0^n$. Specifically, if $n$ is odd, then each of these additional outputs is the XOR of two monomials from $s_0^n$ such that all $n$ intermediate outputs are linearly independent, i.e., all monomials can be extracted from these $n$ intermediate outputs using XORs. If $n$ is even, then the same is true for the first $n-2$ outputs of the second stage, but the final output is just the monomial $x_1,...,x_{n-1}$. In both cases, the number of AND gates used by the second stage is $n-1$.

The third and final stage combines the $n-1$ outputs of the second stage with $s_0^n$ using XORs in order to generate the $n$ different monomials corresponding to the $n$ outputs of $f(x)$. The total number of AND gates used to evaluate $f(x)$ is then $(n-2)+(n-1)=2n-3$.

\subsection{Stage 1}

An AND-optimal construction for computing $s_0^n$ with $n-2$ AND gates was found by Boyar and Peralta~\cite[Lemmas 12 and 13]{boyar2008tight}:

\begin{lemma}[Special case of Lemma 12 in \cite{boyar2008tight}]\label{lm:evenmonomials}
Let the number of inputs $n$ be even. Then, $s_0^n$ can be computed from $s_0^{n-1}$ via
\[
    s_0^n = s_0^{n-1}\land \bigoplus_{i=1}^n x_i.
\]
\end{lemma}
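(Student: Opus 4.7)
The plan is to expand the right-hand side by distributing AND over XOR, $s_0^{n-1}\land \bigoplus_{i=1}^n x_i = \bigoplus_{k=1}^n (s_0^{n-1}\land x_k)$, and then analyze each summand according to how multiplication by $x_k$ acts on the individual monomials $\mu_i := \bigwedge_{j\in\{1,\ldots,n-1\}\setminus\{i\}} x_j$ making up $s_0^{n-1}$.

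First, I would dispatch the easy case $k=n$. Since $s_0^{n-1}$ only involves $x_1,\ldots,x_{n-1}$, multiplying it by $x_n$ simply appends $x_n$ to each $\mu_i$, producing the degree-$(n-1)$ monomial $\bigwedge_{j\in\{1,\ldots,n\}\setminus\{i\}} x_j = f_i(x)$ for $i=1,\ldots,n-1$. So this contribution is $\bigoplus_{i=1}^{n-1} f_i(x)$, i.e.\ every output monomial of $f$ except $f_n(x)=x_1\cdots x_{n-1}$.

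Next I would handle $1\le k\le n-1$ using idempotence of AND. For $i\neq k$ (with $i\le n-1$), the monomial $\mu_i$ already contains $x_k$, so $\mu_i\land x_k=\mu_i$; for $i=k$, the multiplication adds back the missing variable and yields $x_1\cdots x_{n-1}=f_n(x)$. Summing, $s_0^{n-1}\land x_k = (s_0^{n-1}\oplus \mu_k)\oplus f_n(x)$.

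Finally I would XOR-combine the $n$ contributions and use parity. Over $k=1,\ldots,n-1$, the term $s_0^{n-1}$ appears $n-1$ times and $\bigoplus_{k=1}^{n-1}\mu_k=s_0^{n-1}$ appears once, while $f_n(x)$ appears $n-1$ times. Because $n$ is even, $n-1$ is odd: the $s_0^{n-1}$'s cancel against $\bigoplus_k \mu_k$ (together appearing an even number of times) and a single copy of $f_n(x)$ survives. Adding the $k=n$ contribution gives $\bigoplus_{i=1}^n f_i(x)=s_0^n$, as claimed. The delicate step is this parity bookkeeping: if $n$ were odd instead, the same calculation would leave behind uncancelled $s_0^{n-1}$ terms, which is exactly why the lemma is restricted to even $n$ (the odd case is handled by a companion lemma).
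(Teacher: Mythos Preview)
Your proof is correct. The paper does not supply its own proof of this lemma; it simply quotes the statement as a special case of Lemma~12 in Boyar and Peralta and moves on. Your direct ANF expansion---distributing the AND over the XOR, using idempotence to evaluate $\mu_i\land x_k$, and then tracking parities---is exactly the natural verification, and your observation that the cancellation of the $s_0^{n-1}$ terms hinges on $n$ being even correctly identifies why the companion Lemma~\ref{lm:oddmonomials} is needed for odd $n$.
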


\begin{lemma}[Lemma 13 in \cite{boyar2008tight}]\label{lm:oddmonomials}
For odd $n$, $s_0^n$ may be computed using the recursion
\[
    s_0^n = s_0^{n-2}\land (((x_{n-1}\oplus x_n)\land \bigoplus_{i=1}^{n-1} x_i) \oplus x_{n-1})
\]
and the base case $s_0^3=((x_1\oplus x_2)\land (x_2\oplus x_3))\oplus x_2=x_1x_2\oplus x_2x_3\oplus x_1x_3$.
The multiplicative complexity of $s_0^n$ is $n-2$.
\end{lemma}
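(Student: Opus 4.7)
The plan is to verify the recursion by algebraic manipulation and then count AND gates. The base case is immediate: expanding $(x_1\oplus x_2)(x_2\oplus x_3)$ with $x_i^2 = x_i$ gives $x_1x_2\oplus x_1x_3\oplus x_2\oplus x_2x_3$, and XORing $x_2$ yields $x_1x_2\oplus x_2x_3\oplus x_1x_3 = s_0^3$, consuming one AND gate. For the recursive step, I would first decompose $s_0^n$ according to where the ``gap'' falls. Monomials with gap at some $x_i$ with $i\leq n-2$ contain both $x_{n-1}$ and $x_n$ and collectively form $x_{n-1}x_n\cdot s_0^{n-2}$, while the two remaining monomials (gap at $x_{n-1}$ and at $x_n$) combine into $(x_{n-1}\oplus x_n)\,x_1\cdots x_{n-2}$. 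Hence
\[
s_0^n = x_{n-1}x_n\cdot s_0^{n-2}\,\oplus\,(x_{n-1}\oplus x_n)\,x_1\cdots x_{n-2}.
\]
So the task reduces to showing that the right-hand side of the stated recursion equals this expression.

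The crux is an auxiliary identity: for odd $m$, $s_0^m\cdot \bigoplus_{i=1}^m x_i = x_1\cdots x_m$. I would prove this by observing that multiplying $s_0^m$ by $x_k$ fills the gap in its $k$-th monomial $M_k$ (producing the full product $x_1\cdots x_m$) while leaving the other $m-1$ monomials unchanged, so $x_k\,s_0^m = s_0^m\oplus M_k\oplus x_1\cdots x_m$. Summing over $k=1,\ldots,m$, the $M_k$ terms reconstitute $s_0^m$, and odd $m$ collapses $m$ copies of both $s_0^m$ and $x_1\cdots x_m$ to one copy each, leaving only $x_1\cdots x_m$. Since $n$ odd forces $n-2$ odd, I can now apply the identity to $s_0^{n-2}$: split $\bigoplus_{i=1}^{n-1}x_i = \bigoplus_{i=1}^{n-2}x_i \oplus x_{n-1}$, distribute $s_0^{n-2}$ through the formula, use $(x_{n-1}\oplus x_n)x_{n-1} = x_{n-1}\oplus x_{n-1}x_n$, and cancel the $s_0^{n-2}\cdot x_{n-1}$ cross-terms against $s_0^{n-2}\cdot x_{n-1}$; what remains is exactly the decomposition above.

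Counting ANDs, the recursive step uses two (one for $(x_{n-1}\oplus x_n)\wedge\bigoplus_{i=1}^{n-1}x_i$, one for the outer product with $s_0^{n-2}$), and $c_\land(s_0^3)\leq 1$, so by induction on odd $n$ we obtain $c_\land(s_0^n)\leq n-2$. The matching lower bound is the degree lemma from the preliminaries, since every monomial of $s_0^n$ has degree $n-1$ and they are distinct in the ANF, so $\deg(s_0^n)=n-1$. The hardest part will be discovering and proving the auxiliary identity and keeping track of which summands cancel under odd parity of $n-2$; the remaining algebra is routine XOR bookkeeping that crucially relies on that parity (which is why the even-$n$ case requires the separate reduction of Lemma~\ref{lm:evenmonomials}).
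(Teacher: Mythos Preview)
Your argument is correct. Note, however, that the paper does not supply its own proof of this lemma: it is simply quoted as Lemma~13 of Boyar and Peralta, with no accompanying derivation, so there is nothing in the present paper to compare your route against. Your decomposition $s_0^n = x_{n-1}x_n\,s_0^{n-2}\oplus (x_{n-1}\oplus x_n)\,x_1\cdots x_{n-2}$, together with the auxiliary identity $s_0^m\cdot\bigoplus_{i=1}^m x_i = x_1\cdots x_m$ for odd $m$, cleanly verifies the recursion; the AND count plus the degree lower bound from the preliminaries then pin down $c_\land(s_0^n)=n-2$ exactly as stated.
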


We note that, if $n$ is even, this construction computes $s_0^{n-1}$ using the recursion in Lemma~\ref{lm:oddmonomials} and then uses Lemma~\ref{lm:evenmonomials} to arrive at $s_0^n$. Our second stage will make use of the intermediate result $s_0^{n-1}$ if $n$ is even.

\subsection{Stage 2}

In the second stage of our construction, we generate $n-1$ additional linearly independent intermediate results that can be used to extract all $n$ monomials using XORs in the third stage. Specifically, for odd $n$, we compute
\[
    s_i^n := (x_i\oplus x_{i+1})\land s_0^n
\]
for all $i\in\{1,...,n-1\}$. For even $n$, we compute the same $s_i^n$ for $i\in\{1,...,n-2\}$ and we additionally compute the last output of $f(x)$ directly via 
\begin{equation}\label{eq:evenstage2}
    f_n(x)=s_0^{n-1}\land\bigoplus_{i=1}^{n-1}x_i.
\end{equation}
We claim that the ANF of $s_i^n$ contains exactly those two monomials of degree $n-1$ where either $x_i$ or $x_{i+1}$ is missing. We prove this next, before proving the equality in \eqref{eq:evenstage2}.

\begin{lemma}\label{lm:lin-indep-anfs}
Let $n$ be the number of inputs. Then, for each $i\in\{1,...,n-1\}$, the ANF of
\[
    s_i^n := (x_i\oplus x_{i+1})\land s_0^n
\]
consists of exactly two monomials of degree $n-1$; one where $x_i$ is missing, and one where $x_{i+1}$ is missing.
\end{lemma}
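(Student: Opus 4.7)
The plan is to expand $s_i^n = (x_i \oplus x_{i+1}) \land s_0^n$ directly using the ANF of $s_0^n$ given by \eqref{eq:stage1}, and show that the resulting sum collapses to exactly the two claimed monomials. First, I would distribute the multiplication over XOR to write $s_i^n = (x_i \land s_0^n) \oplus (x_{i+1} \land s_0^n)$, and then separately analyse each of $x_\ell \land s_0^n$ for $\ell \in \{i, i+1\}$.

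The key observation is idempotence: for any degree-$(n-1)$ monomial $\bigwedge_{k\neq j} x_k$ appearing in $s_0^n$, the product with $x_\ell$ equals the monomial itself when $\ell \neq j$ (because $x_\ell$ is already present and $x_\ell \land x_\ell = x_\ell$), and equals the full degree-$n$ monomial $\bigwedge_{k=1}^n x_k$ when $\ell = j$. So $x_\ell \land s_0^n$ is the XOR of the full degree-$n$ monomial together with every degree-$(n-1)$ monomial of $s_0^n$ \emph{except} the one missing $x_\ell$. Forming the XOR of the two cases $\ell = i$ and $\ell = i+1$, the degree-$n$ monomials cancel, and a monomial $\bigwedge_{k\neq j} x_k$ survives iff it appears in exactly one of the two expansions, which happens precisely when $j \in \{i, i+1\}$. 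This yields $s_i^n = \bigwedge_{k\neq i} x_k \oplus \bigwedge_{k\neq i+1} x_k$, as claimed.

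There is essentially no real obstacle here beyond careful bookkeeping of which monomials survive the XOR, since the entire argument rests on distributivity of $\land$ over $\oplus$ and idempotence. I note that the statement singles out \emph{adjacent} indices $i$ and $i+1$, but adjacency is irrelevant to this particular lemma; the same argument works for any distinct pair of indices. Adjacency only becomes relevant later, when one must show that the $n{-}1$ expressions $s_1^n,\dots,s_{n-1}^n$ are linearly independent and that each $f_j(x)$ can be reconstructed from them together with $s_0^n$ by XORs alone in stage~3.
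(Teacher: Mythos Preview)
Your proof is correct and follows essentially the same approach as the paper: both expand $(x_i\oplus x_{i+1})\land s_0^n$ using distributivity and idempotence, then track which monomials cancel in the XOR. The only cosmetic difference is that the paper groups the expansion by monomials of $s_0^n$ (each multiplied by $x_i\oplus x_{i+1}$), whereas you group by the terms $x_i$ and $x_{i+1}$ (each multiplied by $s_0^n$); your additional remark that adjacency of $i,i+1$ is irrelevant to this lemma is also correct.
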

\begin{proof}
Each of the $n$ monomials of degree $n-1$ in the ANF of $s_0^n$ is multiplied by $(x_i\oplus x_{i+1})$. For each monomial where both $x_i$ and $x_{i+1}$ are present, a multiplication by either of these two variables results in the same monomial, and they thus cancel. Multiplying the two monomials where $x_i$ or $x_{i+1}$ is missing by $(x_i\oplus x_{i+1})$ results in (1) the degree-$n$ monomial $x_1\cdots x_n$, and (2) the monomial itself (where $x_i$ or $x_{i+1}$ is missing) for each of the two monomials. The degree-$n$ monomial is thus generated twice and, therefore, the only two monomials remaining in the ANF of the multiplication result are those two where $x_i$ or $x_{i+1}$ is missing.
\end{proof}

Finally, we show that the equality in \eqref{eq:evenstage2} holds for even $n$.
\begin{lemma}\label{lm:even-stage2}
Let the number of inputs $n$ be even. Then the last output of $f(x)$ may be computed from $s_0^{n-1}$ using one additional AND gate via
\[
    f_n(x)=s_0^{n-1}\land\bigoplus_{i=1}^{n-1}x_i.
\]
\end{lemma}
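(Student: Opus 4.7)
The plan is to leverage Lemma~\ref{lm:evenmonomials} rather than expand the ANF of $s_0^{n-1}$ from scratch. That lemma already tells us that for even $n$, $s_0^n = s_0^{n-1} \land \bigoplus_{i=1}^{n} x_i$, which differs from the target expression only in that the inner XOR runs up to $x_n$ instead of stopping at $x_{n-1}$. The idea is to split off the $x_n$ term from this identity and identify the resulting correction with $f_n(x)$.

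Concretely, I would proceed in three steps. First, apply distributivity of $\land$ over $\oplus$ to rewrite
\[
    s_0^n = \left(s_0^{n-1} \land \bigoplus_{i=1}^{n-1} x_i\right) \oplus \left(s_0^{n-1} \land x_n\right),
\]
so that the quantity of interest equals $s_0^n \oplus (s_0^{n-1} \land x_n)$.

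Second, evaluate $s_0^{n-1} \land x_n$ directly from the ANF of $s_0^{n-1}$. Since $s_0^{n-1}$ is the XOR of the $n-1$ degree-$(n-2)$ monomials over $\{x_1,\dots,x_{n-1}\}$ and none of these involve $x_n$, multiplying by $x_n$ appends $x_n$ to each and produces exactly those $n-1$ degree-$(n-1)$ monomials over $\{x_1,\dots,x_n\}$ that contain $x_n$, i.e., all summands of $s_0^n$ except the lone monomial $x_1\cdots x_{n-1} = f_n(x)$. Hence $s_0^{n-1} \land x_n = s_0^n \oplus f_n(x)$.

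Third, substitute this back into the expression from the first step to get $s_0^{n-1} \land \bigoplus_{i=1}^{n-1} x_i = s_0^n \oplus s_0^n \oplus f_n(x) = f_n(x)$. The only subtle point is the second step: one has to carefully keep track of which monomials of $s_0^n$ are hit and verify that exactly the one missing $x_n$ is left out. Everything else is routine algebra in $\mathrm{GF}(2)$. An alternative route would be to expand $s_0^{n-1} \land \bigoplus_{i=1}^{n-1} x_i$ term by term as $\bigoplus_{k=1}^{n-1}(s_0^{n-1}\land x_k)$ and use parity counting (noting $n-1$ is odd), but this is more laborious than invoking Lemma~\ref{lm:evenmonomials}.
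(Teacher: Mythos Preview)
Your proof is correct but takes a different route from the paper. The paper expands $s_0^{n-1}\land\bigoplus_{i=1}^{n-1}x_i$ directly: for each of the $n-1$ monomials of $s_0^{n-1}$, multiplication by the $n-2$ variables already present yields an even number of identical copies (which cancel), while multiplication by the one absent variable contributes $x_1\cdots x_{n-1}$; since this happens $n-1$ (odd) times, only $f_n(x)$ survives. This is precisely the ``alternative route'' you mention and dismiss at the end. Your approach instead invokes Lemma~\ref{lm:evenmonomials} to reduce the question to computing $s_0^{n-1}\land x_n$, which is easy because $x_n$ is fresh and simply appends to each monomial. Your argument is slightly slicker in that it offloads the parity bookkeeping onto the already-proved Lemma~\ref{lm:evenmonomials}; the paper's argument is more self-contained and makes the parity structure explicit without relying on that lemma. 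Both are equally valid and of comparable length.
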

\begin{proof}
The $n-1$ monomials in the ANF of $s_0^{n-1}$ are of even degree $n-2$ with variables from $\{x_1,...,x_{n-1}\}$. Therefore, for each monomial, $n-2$ out of the $n-1$ variables in $\{x_1,...,x_{n-1}\}$ are already present in the monomial, and a multiplication by such an $x_i$ results in the same monomial. As this happens an even number of times, all of these terms cancel. For each monomial, the only nontrivial contribution to the result comes from the $x_i$ not present in the monomial, and the contribution is the same for each monomial, namely $x_1\cdots x_{n-1}$. Because this contribution is added to the result an odd number of times (once for each monomial in $s_0^{n-1}$), the only monomial that is left in the ANF of the multiplication result is $x_1\cdots x_{n-1}$, which is equal to $f_n(x)$, as claimed.
\end{proof}

\subsection{Stage 3}
After completing stages 1 and 2, we have $n$ linearly independent intermediate results, each containing either 1 (if $n$ is even), 2, or $n$ monomials of degree $n-1$. In this final stage, we combine these $n$ intermediate results using XORs in order to compute the outputs of $f(x)$. This allows us to prove our main result.

\begin{theorem}
The multiplicative complexity of the vector-valued Boolean function $f$ with $n$ Boolean inputs $x_1,...,x_n$, where the $i$-th output is given by
\[
f_i(x_1,...,x_n)=\bigwedge_{j\neq i} x_j,
\]
for $i\in\{1,...,n\}$ is $c_{\land}(f)=2n-3$.
\end{theorem}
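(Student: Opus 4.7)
The plan is to prove the two bounds separately. For the lower bound $c_\land(f) \geq 2n-3$, I would simply invoke the iterated algebraic-degree argument of Boyar and Find~\cite{boyar2018multiplicative} cited in the introduction. The substance of the theorem therefore lies in the matching upper bound, which I would establish by showing that the three-stage construction of Section~\ref{sec:construction} uses exactly $2n-3$ AND gates and correctly computes every output $f_i(x)$.

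Counting AND gates is immediate: Stage 1 contributes $n-2$ AND gates via Lemmas~\ref{lm:evenmonomials} and~\ref{lm:oddmonomials}, and Stage 2 contributes $n-1$ AND gates (either the $n-1$ products $(x_i \oplus x_{i+1}) \land s_0^n$ for odd $n$, or $n-2$ such products together with the single multiplication of Eq.~\eqref{eq:evenstage2} for even $n$). Stage 3 consists solely of XORs, so the total is $(n-2)+(n-1)=2n-3$.

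The nontrivial step is to verify that Stage 3 can recover each monomial $m_i := \bigwedge_{j\neq i}x_j$ from the available intermediate results using only XORs, i.e., that the $n$ intermediate results span $\{m_1,\ldots,m_n\}$ over $\mathrm{GF}(2)$. I would use Lemma~\ref{lm:lin-indep-anfs} to view each $s_i^n$ (for $i\ge 1$) in the $\mathrm{GF}(2)$-basis $(m_1,\ldots,m_n)$ as the indicator vector of $\{i,i+1\}$, and note that $s_0^n$ is the all-ones vector. For odd $n$, the prefix XOR $s_1^n \oplus s_2^n \oplus \cdots \oplus s_k^n$ yields $m_1 \oplus m_{k+1}$ for every $k\in\{1,\ldots,n-1\}$; XORing all $n-1$ of these together with $s_0^n$ then isolates $m_1$ (the $n-1$ copies of $m_1$ cancel because $n-1$ is even), after which each remaining $m_j$ follows from one more XOR. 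For even $n$, Lemma~\ref{lm:even-stage2} already provides $m_n$ as the final output of Stage 2, and $s_0^n \oplus m_n$ plays the role of the all-ones vector on the coordinates $\{m_1,\ldots,m_{n-1}\}$; the same prefix-XOR argument (now using that $n-2$ is even) recovers $m_1,\ldots,m_{n-1}$.

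The main obstacle I anticipate is not deep but is precisely the parity bookkeeping in the last step: one has to make sure that the $m_1$ terms genuinely cancel when $s_0^n$ is combined with the prefix XORs, which is what forces the odd and even cases of $n$ to be handled slightly differently and is the reason Stage 2 treats even $n$ specially via Eq.~\eqref{eq:evenstage2}. Once this linear-algebra verification is in place, combining the gate count with the cited lower bound yields $c_\land(f)=2n-3$.
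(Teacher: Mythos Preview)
Your proposal is correct and follows essentially the same route as the paper: cite Boyar--Find for the lower bound, count the $(n-2)+(n-1)$ AND gates of Stages~1 and~2, and then verify Stage~3 by showing that the intermediate results span $\{m_1,\ldots,m_n\}$ over $\mathrm{GF}(2)$, treating odd and even $n$ separately. Your prefix-sum extraction is just a repackaging of the paper's formula---indeed, over $\mathrm{GF}(2)$ one has $\bigoplus_{k=1}^{n-1} P_k=\bigoplus_{k\text{ even}} s_k^n$, which is exactly the paper's $\bigoplus_{i} s_{2i}^n$---and the subsequent chain $f_{i}=f_{i-1}\oplus s_{i-1}^n$ matches as well.
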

\begin{proof}
Since Boyar and Find have shown that $c_{\land}(f)\geq 2n-3$~\cite{boyar2018multiplicative}, it remains to show that $c_\land(f)\leq 2n-3$ by completing our construction, which establishes that $c_\land(f)=2n-3$.

We first consider the case where $n$ is odd.
Lemma~\ref{lm:oddmonomials} shows that $s_0^n$ can be computed using $n-2$ AND gates. From Lemma~\ref{lm:lin-indep-anfs} we know that each $s_k^n$ for $k\in\{1,...,n-1\}$ can be computed from $s_0^n$ using a single AND gate.
We may compute the first output $f_1(x)$ via
\[
    f_1(x)=s_0^n\oplus \bigoplus_{i=1}^{\frac{n-1}2} s_{2i}^n .
\]
To see that this is correct, note that $\bigoplus_{i=1}^{\frac{n-1}2} s_{2i}^n$ contains all degree-$(n-1)$ monomials except $f_1(x)=x_2\cdots x_n$, whereas $s_0^n$ contains all degree-$(n-1)$ monomials.
This first output $f_1(x)$ may now be used to obtain $f_2(x)$ via $f_1(x)\oplus s_1^n$. In turn, $f_3(x)=f_2(x)\oplus s_2^n$, and so on, until the final output $f_n(x)=f_{n-1}(x) \oplus s_{n-1}^n$ has been computed. The total number of AND gates in this case is $(n-2) + (n-1)=2n-3$.

For even $n$, Lemma~\ref{lm:oddmonomials} shows that $s_0^{n-1}$ can be computed using $n-3$ AND gates. From Lemma~\ref{lm:evenmonomials}, $s_0^n$ can be computed from $s_0^{n-1}$ using a single AND gate, and Lemma~\ref{lm:even-stage2} shows that a single AND gate is sufficient to compute $f_n(x)$ from $s_0^{n-1}$.
In addition, we compute $s_k^n$ for $k\in\{1,...,n-2\}$ using $n-2$ AND gates. The total AND gate count is thus $(n-3)+1+1+(n-2)=2n-3$, and we can extract the first output as follows:
\[
    f_1(x)=s_0^n\oplus f_n(x) \oplus \bigoplus_{i=1}^{\frac{n-2}2} s_{2i}^n .
\]
To see that this is correct, note that $\bigoplus_{i=1}^{\frac{n-2}2} s_{2i}^n$ contains all degree-$(n-1)$ monomials except $f_1(x)=x_2\cdots x_n$ and $f_n(x)=x_1\cdots x_{n-1}$, whereas $s_0^n$ contains all degree-$(n-1)$ monomials.
To compute the remaining outputs $f_i(x)$ for $i\in\{2,...,n-1\}$, we may again use that $f_i(x)=f_{i-1}(x)\oplus s_{i-1}^n$.
\end{proof}

\bibliographystyle{unsrt}
\bibliography{references}

\end{document}